\documentclass[10pt, final, journal, letterpaper, twocolumn]{IEEEtran}
\makeatletter
\def\ps@headings{%
\def\@oddhead{\mbox{}\scriptsize\rightmark \hfil \thepage}%
\def\@evenhead{\scriptsize\thepage \hfil \leftmark\mbox{}}%
\def\@oddfoot{}%
\def\@evenfoot{}}
\makeatother \pagestyle{headings}

\IEEEoverridecommandlockouts
\usepackage{subfigure}
\usepackage{bbm}
\usepackage{amsfonts}
\usepackage[dvips]{graphicx}
\usepackage{times}
\usepackage{cite}
\usepackage{amsmath}
\usepackage{array}
\usepackage{amssymb}

\newcommand{\bs}{\boldsymbol}

\usepackage{stfloats}
\usepackage{slashbox}
\usepackage{graphicx}
\usepackage{footnote}
\usepackage{booktabs}
\usepackage{array}
\usepackage{algorithm}
\usepackage{subeqnarray}
\usepackage{cases}
\usepackage{threeparttable}
\usepackage{color}
\usepackage{hyperref}
\usepackage{epstopdf}
\usepackage{algpseudocode}
\usepackage{bm}
\usepackage{multirow}
\usepackage{adjustbox}

\newtheorem{proposition}{Proposition}


\begin{document}

\title{Joint Design of Hybrid Beamforming and Reflection Coefficients in RIS-aided mmWave MIMO Systems}

\author{\authorblockN{Renwang Li, Bei Guo, Meixia Tao,\IEEEmembership{ Fellow,~IEEE}, Ya-Feng Liu,\IEEEmembership{ Senior Member,~IEEE}, and Wei Yu,\IEEEmembership{ Fellow,~IEEE}}
\thanks{Part of this work was presented at IEEE Wireless Communications and Networking Conference (WCNC) 2021 \cite{9417417} [DOI: 10.1109/WCNC49053.2021.9417417]. The work of R. Li, B. Guo and M. Tao was supported in part by the National Natural Science Foundation of China (NSFC) under Grant 61941106 and Grant 62125108. The work of Y.-F. Liu was supported in part by NSFC under Grant 12021001 and Grant 11991021. The work of Wei Yu was supported by the Canada Research Chairs program. (Corresponding author: Meixia Tao.)}
\thanks{R. Li, B. Guo and M. Tao are with Department of Electronic Engineering, Shanghai Jiao Tong University, Shanghai, China (emails:\{renwanglee, guobei132, mxtao \}@sjtu.edu.cn).}
\thanks{Y.-F. Liu is with the State Key Laboratory of Scientific and Engineering Computing, Institute of Computational Mathematics
and Scientific/Engineering Computing, Academy of Mathematics and Systems Science, Chinese Academy of Sciences, Beijing
100190, China (e-mail: yafliu@lsec.cc.ac.cn).}
\thanks{W. Yu is with Department of Electrical and Engineering, University of Toronto, Toronto, ON, Canada, M5S 3G4 (e-mail: weiyu@ece.utoronto.ca).}
}
\maketitle
\begin{abstract}
  This paper considers a reconfigurable intelligent surface (RIS)-aided millimeter wave (mmWave) downlink communication system where hybrid analog-digital beamforming is employed at the base station (BS). We formulate a power minimization problem by jointly optimizing hybrid beamforming at the BS and the response matrix at the RIS, under the signal-to-interference-plus-noise ratio (SINR) constraints at all users. The problem is highly challenging to solve due to the non-convex SINR constraints as well as the unit-modulus phase shift constraints for both the RIS reflection coefficients and the analog beamformer. A two-layer penalty-based algorithm is proposed to decouple variables in SINR constraints, and manifold optimization is adopted to handle the non-convex unit-modulus constraints. {We also propose a low-complexity sequential optimization method, which optimizes the RIS reflection coefficients, the analog beamformer, and the digital beamformer sequentially without iteration.} Furthermore, the relationship between the power minimization problem and the max-min fairness (MMF) problem is discussed. Simulation results show that the proposed penalty-based algorithm outperforms the state-of-the-art semidefinite relaxation (SDR)-based algorithm. Results also demonstrate that the RIS plays an important role in the power reduction.
\end{abstract}

\begin{IEEEkeywords}
Reconfigurable Intelligent Surface (RIS), mmWave, hybrid beamforming, sub-connected structure, manifold optimization.
\end{IEEEkeywords}
\section{Introduction}
The millimeter wave (mmWave) communication over 30-300 GHz spectrum is a key technology in 5G and beyond wireless networks to provide high data-rate transmission \cite{6732923,6824746,niu2015survey}. Compared with sub-6 GHz, the high directivity at high frequency bands makes mmWave communication much more sensitive to signal blockage. One promising and cost-effective solution to overcome the blockage issue is to deploy Reconfigurable Intelligent Surfaces (RISs). An RIS is an artificial meta-surface consisting of a large number of passive reflection elements that can be programmed to electronically control the phase of the incident electromagnetic waves \cite{book,8910627}. With the help of a smart controller, RISs can be controlled to enhance the desirable signals via coherent combining, or to suppress the undesirable interference via destructive combining. RISs are spectrum- and energy-efficient since they do not require radio frequency (RF) components or dedicated energy supply. Furthermore, from the implementation perspective, RISs have appealing advantages such as low profile, light-weight, and conformal geometry. Recently, RISs have emerged as a promising technique to enhance the performance of wireless communication systems, especially in mmWave bands \cite{8796365,9122596,9086766}.

As RISs bring a new degree-of-freedom to the optimization
of beamforming design, a key issue of interest in RIS-aided wireless communication systems is to jointly design the active beamforming at the multi-antenna base stations (BSs) and the passive reflection coefficients at the RIS. There have been several prior studies investigating this problem under different system setups and assumptions \cite{ wu2019intelligent,9226616,9246254, 8741198, li2019joint, guo2020weighted}. Specifically, the work \cite{wu2019intelligent} studies the power minimization problem under the signal-to-interference-plus-noise ratio (SINR) constraints and proposes a semidefinite relaxation (SDR) based algorithm for the joint active and passive beamforming design. The work \cite{9226616} extends \cite{wu2019intelligent} to the scenario with multiple RISs and a near-optimal analytical solution is derived. The work \cite{9246254} aims to maximize the minimum weighted SINR at the users and proposes a low-complexity inexact-alternating-optimization approach. The work \cite{8741198} focuses on the energy efficiency problem under individual quality-of-service (QoS) requirements as well as maximum power constraints. Under the maximum transmit power constraints, the work \cite{li2019joint} aims to maximize the minimum SINR, and the work \cite{guo2020weighted} aims to maximize the weighted-sum-rate (WSR) of all users.
Moreover, RISs have also been studied under other communication setups, such as secure communication \cite{8723525, 9198898}, unmanned aerial vehicle (UAV) communication \cite{8959174, 9124704}, and simultaneous wireless information and power transfer (SWIPT) systems \cite{8941080, wu2020joint}.
Note that in all these works on joint active-passive beamforming design, the active beamforming at the BS is fully digital as in most of the multiple-input-multiple-output (MIMO) beamforming literature, which requires each antenna to be connected to one RF chain, and hence has a high hardware cost.

{Unlike the fully digital beamforming structure, hybrid analog and digital (A/D) beamforming at the BS is more practical in mmWave systems since it employs a reduced number of RF chains \cite{8030501,7389996}. It is therefore desirable to consider hybrid beamforming in RIS-aided mmWave communications as a cost-effective alternative. There are very few works along this line of research except \cite{ying2020gmdbased, xiu2020reconfigurable, 9234098}. In specific, the work \cite{ying2020gmdbased} considers the individual design of the digital beamformer, the analog beamformer, and the RIS phase shifts to achieve low error rate in a wideband system. The work \cite{xiu2020reconfigurable} investigates the WSR maximization in a nonorthogonal multiple access (NOMA) system by jointly designing the power allocation, the RIS phase shifts and the hybrid beamforming vector. Therein, the manifold optimization method is adopted for the design of the phase shifts at both the RIS and the analog beamformer, while the digital beamforming is obtained by the successive convex approximation (SCA) based algorithm. The work \cite{9234098} focuses on maximizing the spectral efficiency in a single-user mmWave MIMO system by jointly optimizing the RIS reflection coefficients and the hybrid beamforming vector at the BS. The manifold optimization is adopted to handle the RIS reflection coefficients, and then the digital beamforming is obtained through the singular value decomposition (SVD) of the cascaded channel.}

In this work, we consider an RIS-aided multi-user downlink mmWave system, and investigate the joint design of hybrid beamforming at the BS and reflection coefficients at the RIS.  Unlike the previous works \cite{ying2020gmdbased, xiu2020reconfigurable, 9234098} which all employ the fully-connected hybrid architecture at the BS with each RF chain connected to all antenna elements, we employ the sub-connected hybrid architecture with each RF chain only connected to a disjoint subset of antenna elements. The sub-connected architecture is more appealing for its further reduced hardware cost and power consumption.

The main contributions and results of this paper are listed as follows.
\begin{itemize}
\item {We first formulate the so-called QoS problem for minimizing the total transmit power at the BS subject to individual SINR constraints at all users. The problem is highly non-convex due to the deeply coupled variables and the unit-modulus phase shifts constraints. To tackle this problem, we propose a two-layer penalty-based  algorithm where the block coordinate descent (BCD) method is adopted in the inner layer to solve a penalized problem and the penalty factor is updated in the outer layer until convergence. The penalty method can decouple the optimization variables and make the problem much easier to handle. In the BCD method, considering the same unit-modulus constraints on both the BS analog beamformer and the RIS response matrix, they can be updated simultaneously by using the manifold optimization method.}
\item {In order to reduce the complexity, we propose a sequential optimization method where the RIS coefficients are obtained by maximizing the channel gain of the user with the worst channel state; the analog beamforming is obtained by minimizing the Euclidean distance between the fully digital beamforming and the hybrid beamforming; and the digital beamforming is optimally obtained by the second-order cone program (SOCP) method.}
\item {We discuss a closely related problem of the QoS problem, which is the max-min fairness (MMF) problem. The MMF problem is more difficult to solve than the QoS problem due to its non-smooth objective function. However, we can solve the MMF problem by solving a series of QoS problems.}
\end{itemize}

{Finally, we conduct comprehensive simulations to validate the performance of the proposed algorithms. It is shown that the proposed penalty-based algorithm outperforms the traditional SDR-based optimization algorithm. Results also demonstrate that the proposed hybrid beamforming at the BS can perform closely to a fully digital beamforming system. In addition, the transmit power at the BS can be greatly reduced by employing a large number of RIS elements on the BS side or the user side. Furthermore, it is sufficient for practical use when both the RIS and the analog beamformer have 3-bit quantizers.}

The rest of the paper is organized as follows. Section \ref{sec_model} introduces the RIS-aided mmWave MIMO system model, and formulates the power minimization problem.  A two-layer penalty-based algorithm is proposed to solve the power minimization problem in Section \ref{sec_qos}. A low-complexity sequential optimization method is proposed in Section \ref{individual}. The relationship between the QoS problem and the MMF problem is studied in Section \ref{discussion}. Simulation results are provided in Section \ref{sec_simulation}. Finally, Section \ref{sec_conclusion} concludes this paper.

\emph{Notations}: The imaginary unit is denoted by $j=\sqrt{-1}$. Vectors and matrices are denoted by bold-face lower-case and upper-case letters, respectively.
The conjugate, transpose, conjugate transpose and pseudo-inverse of the vector $\bf x$ are denoted by $ \bf x^*$, $\mathbf{x}^T$, $\mathbf x^H$ and $\bf x^\dagger$, respectively. Further, we use $\bf I$ and $\mathbf{O}$ to denote an identity matrix and all-zero matrix of appropriate dimensions, respectively; we use $\mathbb{C}^{x\times y}$ to denote the space of $x\times y$ complex-valued matrices. The notations  $\arg(\cdot)$ and $\operatorname{Re}(\cdot)$  denote the argument and real part of a complex number, respectively. The notations $\mathbb{E}(\cdot)$ and $\operatorname{Tr}(\cdot)$ denote the expectation and trace operation, respectively; $\odot$ represents the Hadamard product; $\|\cdot\|$ represents the Frobenius norm. For a vector $\mathbf{x}$, $\operatorname{diag}(\mathbf{x})$ denotes a diagonal matrix with each diagonal element being the corresponding element in $\mathbf{x}$. For a vector $\mathbf{x}$, $\nabla f(\mathbf{x}_i)$ denotes the gradient vector of function $f(\mathbf{x})$ at the point $\mathbf{x}_i$. Finally, The distribution of a circularly symmetric complex Gaussian (CSCG) random vector with mean vector $x$ and covariance matrix $\Sigma$ is denoted by $\mathcal{C}\mathcal{N}(x,\Sigma)$; and $\sim$ stands for ``distributed as''.

\section{System Model And Problem Formulation}
\label{sec_model}
\subsection{System Model}
\begin{figure}[t]
\begin{centering}
\vspace{-0.1cm}
\includegraphics[width=0.45\textwidth]{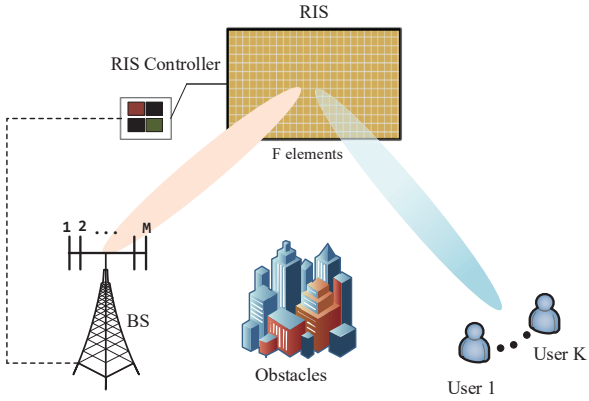}
 \caption{\small{An RIS-aided downlink mmWave communication system.}}\label{fig_systemmodel}
\end{centering}
\end{figure}
As shown in Fig.~\ref{fig_systemmodel}, we consider an RIS-aided downlink mmWave communication system where one BS, equipped with $M$ antennas, communicates with $K$ single-antenna users via the help of one RIS equipped with $F$ unit cells. The BS employs the sub-connected hybrid A/D beamforming structure with $N$ RF chains, each connected to a disjoint subset of $D=M/N$ antennas. Let $s_k$ denote the information signal intended to user $k$, for $k\in \mathcal{K} \triangleq \{1, 2, \ldots,K\}$. The signals are assumed to be independent of each other and satisfy $\mathbb{E}(|s_k|^2)=1$. Each of these signals is first weighted by a digital beamforming vector, denoted as $\mathbf{w}_k\in\mathbb{C}^{N\times{1}}$. These weighted signal vectors are summed together and each entry is sent to an RF chain, then multiplied by an analog beamforming vector, denoted as $\mathbf{v}_n\in \mathbb{C}^{D\times{1}}$, for $n\in \mathcal{N}\triangleq \{1,2,\ldots,N\}$. Each entry of $\mathbf{v}_n$, denoted as ${v}_{n,d},  \forall d\in \mathcal{D}\triangleq\{1,2,\ldots,D\}$ is a phase shifter. Discrete phase shifts are considered. Denote $\mathcal{S}_a$ as the set of all possible phase shifts for the analog beamformer, given by
\begin{equation} 
\mathcal{S}_{a} \triangleq \left\{e^{j\theta} \; \bigg| \;  \theta \in \left\{0,\frac{2\pi}{2^{Q_1}}, \ldots, \frac{2\pi(2^{Q_1}-1)}{2^{Q_1}}\right\} \right\},
\end{equation}
where $Q_1$ is the number of control bits for each analog phase shifter. In the special case when $Q_1 = \infty$, it becomes continuous phase shifts.  The overall analog beamforming matrix can be represented as
\begin{equation}
\mathbf{V}=\left[\begin{array}{cccc}
\mathbf{v}_{1} & \mathbf{0} & \cdots & \mathbf{0} \\
\mathbf{0} & \mathbf{v}_{2} & \cdots & \mathbf{0} \\
\mathbf{0} & \mathbf{0} & \ddots & \mathbf{0} \\
\mathbf{0} & \mathbf{0} & \cdots & \mathbf{v}_{N}
\end{array}\right].
\end{equation}
The total transmit power of the BS is given by
\begin{equation}
    P_\text{transmit}=\sum\limits_{k=1}^{K}\Arrowvert{\mathbf{V}\mathbf{w}_k}\Arrowvert^2
     =D \sum\limits_{k=1}^K  \Arrowvert{\mathbf{w}_k}\Arrowvert^2.
\end{equation}

The RIS is connected to the BS through an RIS control link for transmission and information exchange. Let $\mathcal{F} \triangleq \{1,2,\ldots,F\}$ denote the set of all RIS unit cells, and define the response matrix at the RIS as
\begin{equation}
  \mathbf{\Theta}= \text{diag} {(b_1,b_2,\ldots,b_F)},
\end{equation}
where $b_f=\beta_f e^{j\theta_f}, \beta_f \in [0,1]$ and $\theta_f\in{[0,2\pi)}$ are the amplitude reflection coefficient and the phase shift of the $f$-th unit cell, respectively. In this paper, we assume $\beta_f=1, \forall f \in \mathcal{F}$ to maximize the signal reflection. Denote $\mathcal{S}_{r}$ as the set of all possible phase shifts for the RIS reflection coefficients, given by
\begin{equation}
\mathcal{S}_{r} \triangleq \left\{e^{j\theta} \; \bigg| \; \theta \in \left\{0,\frac{2\pi}{2^{Q_2}}, \ldots, \frac{2\pi(2^{Q_2}-1)}{2^{Q_2}}\right\} \right\},
\end{equation}
where $Q_2$ is the number of control bits for each RIS element. Again the special case of $Q_2 = \infty$ corresponds to the continuous phase shifts.

We assume that the BS-user direct link is blocked, and thus the direct path can be ignored. {The signal power reflected two or more times is much lower than that reflected just once due to the high free-space path loss. Thus, we ignore the power of the signals that are reflected by the RIS more than once.} In addition, we assume that the channel state information (CSI) of all links involved is perfectly known at the BS and all the channels experience quasi-static flat-fading. How to obtain accurate CSI is an important and challenging issue in the RIS-aided communication system. {The CSI can be obtained through uplink pilots due to the channel reciprocity} and some early attempts can be found in \cite{chen2019channel,9103231, 9127834,9374451,9398559}. Suppose that $\mathbf{G}\in \mathbb{C}^{F \times M}$ is the channel matrix from the BS to the RIS, $\mathbf{h}_{k}^{H} \in\mathbb{C}^{1\times{F}}$ is the channel vector from the RIS to user $k$. Then the received signal of user $k$ can be represented as
\begin{equation}
y_k=\mathbf{h}_{k}^{H}\mathbf{\Theta}\mathbf{G}\mathbf{V}\sum\limits_{j=1}^{K}{\mathbf{w}_js_j}+n_k, \forall{k}\in\mathcal{K},
\end{equation}
where $n_k\thicksim\mathcal{C}\mathcal{N}{(0, \sigma_k^2)}$ is the additive white Gaussian noise at the receiver of user $k$ with zero mean and variance $\sigma^2_k$.  The received SINR of user $k$ can be expressed as
\begin{equation}
\text{SINR}_k=\frac{\arrowvert{\mathbf{h}_{k}^H\mathbf{\Theta}\mathbf{G}\mathbf{V}\mathbf{w}_k} \arrowvert^2}{\sum\limits_{j\not=k}\arrowvert{\mathbf{h}_{k}^H\mathbf{\Theta}\mathbf{G} \mathbf{V}\mathbf{w}_j}\arrowvert^2+\sigma_k^2}, \forall{k}\in\mathcal{K}.
\end{equation}
\subsection{mmWave Channel Model}
We adopt the widely used narrowband clustered channel model \cite{6717211} for mmWave communications. Specifically, the channel matrix between the BS and the RIS can be written as
\begin{equation}
\mathbf{G}={\sqrt{\frac{MF}{{N_\text{cl}}_1{N_\text{ray}}_1}} \sum \limits_{i=1}^{{N_\text{cl}}_1} \sum\limits_{l=1}^{{N_\text{ray}}_1} \alpha_{il}\mathbf{a}_R(\phi_{il}^{Rr}, \delta_{il}^{Rr})\mathbf{a}_B(\phi_{il}^{B}, \delta_{il}^{B})^H}.
\end{equation}
Here, ${N_\text{cl}}_1$ denotes the number of scattering clusters,  ${N_\text{ray}}_1$ denotes the number of rays in each cluster, and $\alpha_{il}$ denotes the channel coefficient of the $l$-th ray in the $i$-th propagation cluster. Moreover, $\mathbf{a}_R(\phi_{il}^{Rr},\delta_{il}^{Rr})$ and $\mathbf{a}_B(\phi_{il}^B,\delta_{il}^B)$ represent the receive array response vectors of the RIS and the transmit array response vectors of the BS respectively, where $\phi_{il}^{Rr}(\phi_{il}^{B})$ and $\delta_{il}^{Rr}(\delta_{il}^B)$ represent azimuth and elevation angles of arrival at the RIS (or departing from the BS).
 The channel vector between the RIS and the $k$-th user can be represented as
\begin{equation}
\mathbf{h}_{k}=\sqrt{\frac{F}{{N_\text{cl}}_2{N_\text{ray}}_2}} \sum \limits_{i=1}^{{N_\text{cl}}_2} \sum\limits_{l=1}^{{N_\text{ray}}_2} \beta_{il}\mathbf{a}_R(\phi_{il}^{Rt},\delta_{il}^{Rt}).
\end{equation}
Here, ${N_\text{cl}}_2$, ${N_\text{ray}}_2$, $\beta_{il}$, $\phi_{il}^{Rt}$ and $\delta_{il}^{Rt}$ are defined in the same way as above.

In this paper, we consider the uniform planar array (UPA) structure at both BS and RIS. Consequently, the array response vector can be denoted as
 \begin{equation}
    \begin{aligned}
      \label{upa}
        \mathbf{a}_{{z}}\left(\phi, \delta\right)= & \frac{1}{\sqrt{A_1 A_2}}\left[1, \ldots, e^{j \frac{2 \pi}{\lambda} d_1 \left(o \sin \phi \sin \delta+p \cos \delta\right)}\right.\\
        & \left.\ldots, e^{j \frac{2 \pi}{\lambda} d_1 \left((A_1-1) \sin \phi \sin \delta)+(A_2-1) \cos \delta\right)}\right]^{T},
    \end{aligned}
  \end{equation}
where ${z} \in \{R,B\}$, $\lambda$ is the signal wavelength, $d$ is the antenna or unit cell spacing which is assumed to be half wavelength distance, $0\leq{o}<A_1$ and $0\leq{p}<A_2$, $A_1$ and $A_2$ represent the number of rows and columns of the UPA in the 2D plane, respectively.
\subsection{Problem Formulation}
\label{pro_formulation}
We consider the QoS problem which aims to minimize the transmit power at the BS by jointly optimizing the digital beamforming matrix  $\mathbf{W}=\left[\mathbf{w}_1, \mathbf{w}_2,\ldots,\mathbf{w}_K \right]\in \mathbb{C}^{N\times K}$ and the analog beamforming matrix $\mathbf{V}$ at the BS, as well as the overall response matrix $\mathbf{\Theta}$ at the RIS, subject to QoS constraints for all users. The problem can be formulated as
    \begin{subequations}\label{prob_original}
    \begin{align}
            {\mathcal{P}_0: \quad \min \limits_{\{\mathbf{V},\mathbf{W},\mathbf{\Theta}\}}} \quad & {D\sum \limits_{k=1}^{K}\left\| \mathbf{w}_{k}\right\|^{2}} \\
            {\text { s.t. }}\quad\quad & {\text{SINR}_{k} \geq \gamma_{k}, \forall k\in\mathcal{K}},  \label{const1}\\
            {}&{{v}_{n,d} \in \mathcal{S}_{a}, \forall n \in \mathcal{N}, \forall d \in \mathcal{D}}, \label{const2}\\
            {}&{ b_f \in \mathcal{S}_{r}, \forall f \in \mathcal{F}}, \label{const3}
    \end{align}
    \end{subequations}


%
where $\gamma_k>0$ is the minimum SINR requirement of user $k$.

The problem $\mathcal{P}_0$ is highly non-convex due to the non-convex SINR constraints \eqref{const1} and the unit-modulus phase shifts constraints \eqref{const2}, \eqref{const3}, and thus difficult to be optimally solved. A commonly used approach to solve such problem approximately is to apply the BCD technique in conjunction with the SDR method as in \cite{wu2019intelligent,9148827}. The BCD technique updates just one block of variables while fixing all the others at a time. In particular, at each iteration, the digital beamforming matrix can be solved via SOCP, while both the analog beamforming matrix and the RIS response matrix can be solved via SDR. Note that SDR cannot guarantee the feasibility due to the rank-one constraint and thus an additional randomization procedure is generally needed. Its complexity is high for the large RIS size. In addition, when the number of users is close to the number of RF chains, the above approach may become invalid because the randomization procedure may fail to find a feasible solution even after a large number of randomization. In this work, we propose a two-layer penalty-based algorithm to solve the problem $\mathcal{P}_0$ as detailed in the next section.

\section{Penalty-based Joint Optimization Algorithm}
\label{sec_qos}
In this section, we propose a two-layer penalty-based method by exploiting the penalty method, where  the BCD method is adopted in the inner layer to solve a penalized problem and the penalty factor is updated in the outer layer until convergence. Specifically, we firstly introduce auxiliary variables $\{t_{k,j}\}$ to represent $ \mathbf{h}_{k}^{H} \mathbf{\Theta} \mathbf{G} \mathbf{V} \mathbf{w}_j$ such that the variables $\mathbf{W}$, $\mathbf{V}$ and $\mathbf{\Theta}$ can be decoupled. Then, the non-convex constraints \eqref{const1} can be equivalently written as
\begin{subequations}
    \begin{align}
        {\frac{ \left| t_{k,k} \right|^{2}} {\sum_{j \neq k}^{K}\left| t_{k,j} \right|^{2}+\sigma_{k}^{2}} \geq \gamma_{k}, \forall k \in \mathcal{K},}\label{penalty_ori_const1}\\
		{t_{k,j}= \mathbf{h}_{k}^{H} \mathbf{\Theta} \mathbf{G} \mathbf{V} \mathbf{w}_j}, \forall k,j \in \mathcal{K}. \label{penalty_ori_const2}
    \end{align}
\end{subequations}

Then, the equality constraints \eqref{penalty_ori_const2} can be relaxed and added to the objective function as a penalty term. Thereby, the original problem $\mathcal{P}_0$ can be converted to the following penalized problem

 \begin{subequations} \label{penalty_prob}
    \begin{align}
            {\mathcal{P}_1(\rho):} {\min \limits_{\small{\mathbf{V}, \mathbf{W}, \mathbf{\Theta},\{t_{k,j}\}}}} \quad & D \sum \limits_{k=1}^{K}\left\| \mathbf{w}_{k}\right\|^{2} \notag \\
            &+ \frac{\rho}{2} \sum \limits_{j=1}^K \sum \limits_{k=1}^K \left| \mathbf{h}_{k}^{H} \mathbf{\Theta} \mathbf{G} \mathbf{V} \mathbf{w}_j-t_{k,j} \right|^2  \\
            {\text { s.t. }}\quad\qquad & {\eqref{penalty_ori_const1},\eqref{const2},\eqref{const3} },
    \end{align}
\end{subequations}
where $\rho>0$ is the penalty factor.
Generally, the choice of $\rho$ is crucial to balance the original objective function and the equality constraints. It is seen that the objective function in $\mathcal{P}_1(\rho)$ is dominated by the penalty term when $\rho$ is large enough and consequently the equality constraints \eqref{penalty_ori_const2} can be well met by the solution. Therefore, we can start with a small value of $\rho$ to get a good starting point, and then by gradually increasing $\rho$, a high precision solution can be obtained.

{There are mainly two different methods to handle the discrete phase shifts. First, the optimal solution can be found by the exhaustive search method \cite{8930608}. However, its complexity is too high to be practical. The second method is to relax the discrete phases to continuous ones and then apply projection \cite{9226616,9133142}. As such, in the rest of the paper, we adopt the projection method. Specifically, we first relax the discrete phase shifts of analog beamforming and RIS coefficients to continuous ones, then solve the relaxed problem with the proposed algorithms, finally project the obtained continuous solution back to the discrete set.}

\subsection{Inner Layer: BCD Algorithm for Solving Problem $\mathcal{P}_1(\rho)$}
\label{sec3a}
For any given $\rho$, though the problem $\mathcal{P}_1(\rho)$ is still non-convex, all the optimization variables $\{\mathbf{W}, \{\mathbf{\Theta}, \mathbf{V}\}, \{t_{k,j}\}\}$ are decoupled in the constraints. We therefore adopt the BCD method to optimize each of them alternately.
\subsubsection{Optimize $\mathbf{W}$}
When $\mathbf{V}$, $\mathbf{\Theta}$ and $\{t_{k,j}\}$ are fixed, problem $\mathcal{P}_1(\rho)$ becomes an unconstrained convex problem. Consequently, the optimal $\mathbf{W}$ can be obtained by the first-order optimality condition, i.e.,
\begin{equation}\label{penalty_optimw}
    \mathbf{w}_k=\rho \mathbf{A}_1^{-1} \sum \limits_{j=1}^K \tilde{\mathbf{h}}_j^H t_{j,k}, \forall k \in \mathcal{K},
\end{equation}
where $ \tilde{\mathbf{h}}_j=\mathbf{h}_{ j}^{H} \mathbf{\Theta} \mathbf{G} \mathbf{V}$ and $\mathbf{A}_1=2D \mathbf{I}_N + \rho \sum \limits_{j=1}^K \tilde{\mathbf{h}}_j^H \tilde{\mathbf{h}}_j$.

\subsubsection{{Optimize $\{\mathbf{\Theta},\mathbf{V}\}$}}
\label{manifold_b}
Let $\mathbf{b}\triangleq[b_1,b_2,\ldots,b_F]^{H}$,
    $
     \mathbf{x} \triangleq\left[\mathbf{v}_{1}^{T}, \mathbf{v}_{2}^{T}, \ldots, \mathbf{v}_{N}^{T}\right]^{T} \in \mathbb{C}^{M\times 1},
    $
    and
    $
    \mathbf{Y}_j \triangleq \text{diag}\{{w}_{ j,1}\mathbf{I}_D, \ldots, {w}_{ j,N}\mathbf{I}_D \} \in \mathbb{C}^{M\times M},
    $
 where $\left|{x}_m\right| = 1, \forall m\in \mathcal{M}\triangleq \{1,2,\ldots,M\}$ and ${w}_{j,n}$ denotes the $n$-th entry of $\mathbf{w}_j$. Then, we can rewrite $\mathbf{V} \mathbf{w}_j=\mathbf{Y}_j \mathbf{x} \in \mathbb{C}^{M\times 1}$ so that the optimization problem is formulated in term of $(\mathbf{b},\mathbf{x})$.
When the digital beamforming matrix $\mathbf{W}$  and the auxiliary variables $\{t_{k,j}\}$  are fixed, the problem $\mathcal{P}_1(\rho)$  is reduced to (with constant terms ignored)
\begin{subequations}\label{optimy1}
\begin{align}
{\min \limits_{\mathbf{b},\mathbf{x} }} &\quad {f(\mathbf{b},\mathbf{x}) =\sum \limits_{j=1}^{K} \sum \limits_{k=1}^{K}\left|\mathbf{b}^{H} \mathbf{c}_{k, j}\mathbf{x}-t_{k, j}\right|^{2}} \\
{\text { s.t. }} &\quad {|b(f)|=1, \forall f \in \mathcal{F}},\label{opty_const1}\\
& \quad{|x(m)|=1, \forall m \in \mathcal{M}},
\end{align}
\end{subequations}
where $\mathbf{c}_{k,j}=\text{diag} (\mathbf{h}_{k}^H) \mathbf{G} \mathbf{Y}_j \in \mathbb{C}^{F\times M}$. In the following, we would like to adopt three different methods to tackle the problem \eqref{optimy1}.

\textbf{\emph{Method One: Alternating Optimization}}
The first idea is to alternately optimize one of the variables $\mathbf{b}$ and $\mathbf{x}$ while keeping the other fixed. When $\mathbf{x}$ is fixed, the main obstacles of the problem \eqref{optimy1} lie in the unit-modulus phase shifts constraints \eqref{opty_const1}. Note that they form a complex circle manifold $\mathcal{M}= \{\mathbf{b}\in \mathbb{C}^F: |b_1|=\cdots=|b_F|=1\}$ \cite{absil2009optimization}. Therefore, the problem \eqref{optimy1} can be efficiently solved by the manifold optimization technique. In specific, we adopt the Riemannian conjugate gradient (RCG) algorithm. The RCG algorithm is widely applied in hybrid beamforming design \cite{yu2016alternating} and recently applied in RIS-aided systems as well \cite{yu2019miso},\cite{guo2020weighted}. In the following we briefly review the general procedure of the RCG algorithm.

Each iteration of the RCG algorithm involves four key steps, namely, to compute the Riemannian gradient, to do the transport, to find the search direction and to do the retraction.

Denote ${f(\mathbf{b}) =\sum \limits_{j=1}^{K} \sum \limits_{k=1}^{K}\left|\mathbf{b}^{H} \mathbf{c}_{k, j}\mathbf{x}-t_{k, j}\right|^{2}}$. For any given point $\mathbf{b}_i$, the Riemannian gradient $\operatorname{grad} f(\mathbf{b}_i)$ is defined as the orthogonal projection of the Euclidean gradient $\nabla f(\mathbf{b}_i)$ onto the tangent space ${T}_{\mathbf{b}_i} \mathcal{M}$ of the manifold $\mathcal{M}$ at point $\mathbf{b}_i$, which can be expressed as
\begin{equation}
T_{\mathbf{b}_i} \mathcal{M}=\left\{\mathbf{b} \in \mathbb{C}^{F}: \operatorname{Re}\left\{\mathbf{b} \odot \mathbf{b}_i^{*}\right\}=\mathbf{0}_{F}\right\}.
\end{equation}
The Euclidean gradient at the point $\mathbf{b}_i$ is given by
\begin{equation}
    \nabla f(\mathbf{b}_i) = 2\sum \limits_{j=1}^K \sum \limits_{k=1}^K \mathbf{c}_{k,j} \mathbf{x} (\mathbf{x}^H \mathbf{c}_{k,j}^H \mathbf{b} - t_{k,j}^H).
\end{equation}
Then, the Riemannian gradient at the point $\mathbf{b}_i$ is given by
\begin{equation} \label{grad}
\operatorname{grad} f(\mathbf{b}_i)=\nabla f(\mathbf{b}_i) -\operatorname{Re}\left\{\nabla f(\mathbf{b}_i) \odot \mathbf{b}_i^*\right\} \odot \mathbf{b}_i.
\end{equation}

With the Riemannian gradient, the optimization technique in the Euclidean space can be extended to the manifold space. Here, we adopt the conjugate gradient method, where the search direction can be updated by
\begin{equation} \label{direction}
\boldsymbol{\eta}_{i+1}=-\operatorname{grad} f(\mathbf{b}_{i+1})+\lambda_1 \mathcal{T}_{\mathbf{b}_{i} \rightarrow \mathbf{b}_{i+1}}\left(\boldsymbol{\eta}_{i}\right),
\end{equation}
where $\boldsymbol{\eta}_i$ is the search direction at $\mathbf{b}_i$,  $\lambda_1$ is the update parameter chosen as the Polak-Ribiere parameter \cite{absil2009optimization}, and $\mathcal{T}_{\mathbf{b}_{i} \rightarrow \mathbf{b}_{i+1}}\left(\boldsymbol{\eta}_{i}\right)$ is the transport operation. Note that $\boldsymbol{\eta}_i$ and $\boldsymbol{\eta}_{i+1}$ lie in different tangent spaces and they cannot be conducted directly. Therefore, the transport operation $\mathcal{T}_{\mathbf{b}_{i} \rightarrow \mathbf{b}_{i+1}}\left(\boldsymbol{\eta}_{i}\right)$ is needed to map the previous search direction from its original tangent space to the current tangent space at the current point $\mathbf{b}_{i+1}$. The transport operation is given by
\begin{equation} \label{transport}
\begin{aligned}
\mathcal{T}_{\mathbf{b}_{i} \rightarrow \mathbf{b}_{i+1}}\left(\boldsymbol{\eta}_{i}\right) : T_{\mathbf{b}_{i}} \mathcal{M} & \mapsto T_{\mathbf{b}_{i+1}} \mathcal{M}: \\
\boldsymbol{\eta}_{i} & \mapsto \boldsymbol{\eta}_{i}-\operatorname{Re}\left\{\boldsymbol{\eta}_{i} \odot \mathbf{b}_{i+1}^{*}\right\} \odot \mathbf{b}_{i+1}.
\end{aligned}
\end{equation}

Since the updated point may leave the previous manifold space, a retraction operation $\operatorname{Retr}_{\mathbf{b}}(\lambda_2 \boldsymbol{\eta}_i)$ is needed to project the point back to the manifold:
\begin{equation} \label{retraction}
\begin{aligned}
\operatorname{Retr}_{\mathbf{b}_i}(\lambda_2 \boldsymbol{\eta}_i) : T_{\mathbf{b}_{i}} \mathcal{M} & \mapsto \mathcal{M}: \\
\lambda_2 \boldsymbol{\eta}_{i} & \mapsto \frac{\left(\mathbf{b}_i+\lambda_{2} \boldsymbol{\eta}_{i}\right)_{j}}{\left|\left(\mathbf{b}_i+\lambda_{2} \boldsymbol{\eta}_{i}\right)_{j}\right|},
\end{aligned}
\end{equation}
where $\lambda_2$ is the Armijo backtracking line search step size, and $(\mathbf{b}_i+\lambda_{2} \boldsymbol{\eta}_{i})_j$ denotes the $j$-th entry of $\mathbf{b}_i+\lambda_{2} \boldsymbol{\eta}_{i}$.

The key steps are introduced above, and the consequent algorithm for solving the problem \eqref{optimy1} with fixed $\mathbf{x}$ is summarized in \textit{Algorithm} \ref{rcg}. \textit{Algorithm} \ref{rcg} is guaranteed to converge to a stationary point \cite{absil2009optimization}.

When $\mathbf{b}$ is fixed, $\mathbf{x}$ can be also updated similarly by the RCG algorithm.

 \begin{algorithm}[t]
    \caption{RCG Algorithm for solving problem \eqref{optimy1} with fixed $\mathbf{x}$}
    \label{rcg}
    \hspace*{0.02in} {\bf Input:} $\{\mathbf{c}_{k,j}\}$, $\mathbf{x}$,  $\mathbf{b}_0 \in \mathcal{M}$
    \begin{algorithmic}[1]
    \State Calculate $\boldsymbol{\eta}_0= -\operatorname{grad} f(\mathbf{b}_0)$ according to \eqref{grad} and set $i=0$;
    \Repeat
            \State Choose the Armijo backtracking line search step size $\lambda_2$;
            \State  Find the next point $\mathbf{b}_{i+1}$ using retraction according to \eqref{retraction};
            \State Calculate the Riemannian gradient $\operatorname{grad} f(\mathbf{b}_{i+1})$ according to \eqref{grad};
            \State Calculate the transport $\mathcal{T}_{\mathbf{b}_{i} \rightarrow \mathbf{b}_{i+1}} \left(\boldsymbol{\eta}_{i}\right)$ according to \eqref{transport};
            \State Choose the Polak-Ribiere parameter $\lambda_1$;
            \State Calculate the conjugate direction $\boldsymbol{\eta}_{i+1}$ according to \eqref{direction};
            \State $i \leftarrow i+1$;
    \Until $\|\operatorname{grad} f(\mathbf{b}_i)\|_2 \leq \epsilon_1$.
    \end{algorithmic}
 \end{algorithm}

\textbf{\emph{Method Two: RCG-based Joint Optimization}}
Note that both $\mathbf{b}$ and $\mathbf{x}$ of the problem \eqref{optimy1} are subject to unit-modulus constraints. Thus we can concatenate them and treat as a higher-dimensional vector subject to the same unit-modulus constraints. Specifically, let $\mathbf{z}=\left[\mathbf{b}^H, \mathbf{x}^H\right]^H \in \mathbb{C}^ {(F+M)\times 1}$, and we can rewrite the problem \eqref{optimy1} as follows
\begin{subequations} \label{penalty_optimb}
    \begin{align}
        {\min \limits_{\mathbf{z}}} \quad & { f(\mathbf{z})=\sum \limits_{j=1}^{K} \sum \limits_{k=1}^{K}\left| \mathbf{z}^H \mathbf{d}_{k,j} \mathbf{z}-t_{k,j} \right|^2} \label{penalty_obtimb_obj}\\
        {\text { s.t. }} \quad & { |z(i)|=1, \forall i \in \mathcal{Z}}\label{unitm},
    \end{align}
\end{subequations}
where $\mathbf{d}_{k,j}=\left[\begin{array}{l}
\mathbf{I}_{F \times F} \\
\mathbf{O}_{M \times F}
\end{array}\right] \mathbf{c}_{k, j}\left[\mathbf{O}_{M \times F}\quad \mathbf{I}_{M \times M}\right] \in \mathbb{C}^{(M+F)\times (M+F)}$ and $\mathcal{Z}\triangleq \{1,2, \ldots, F+M\}$.
{The Euclidean gradient of the function $f(\mathbf{z})$ is given by}
\begin{equation}
\nabla f(\mathbf{z})=\left[\begin{array}{c}
2 \sum \limits_{j=1}^{K} \sum_{k=1}^{K} \mathbf{c}_{k, j} \mathbf{x}\left(\mathbf{x}^{H} \mathbf{c}_{k, j}^{H} \mathbf{b}-t_{k, j}^{H}\right) \\
2 \sum \limits_{j=1}^{K} \sum_{k=1}^{K} \mathbf{c}_{k,j}^{H} \mathbf{b}\left(\mathbf{b}^{H} \mathbf{c}_{k, j} \mathbf{x}-t_{k,j}\right)
\end{array}\right].
\end{equation}
Therefore, the problem \eqref{penalty_optimb} can be effectively solved by the RCG algorithm.

Note that the objective function of the problem \eqref{optimy1} is convex over $\mathbf{b}$ or $\mathbf{x}$. In the alternating optimization, the subproblem is reduced to an unconstrained convex problem in the manifold space. Therefore, the optimal solution can be obtained for each subproblem by the RCG algorithm. However, the function $f(\mathbf{z})$ is not jointly convex in $\mathbf{b}$ and $\mathbf{x}$. Thus, in the RCG-based joint optimization, only the sub-optimal solution can be obtained.

\textbf{\emph{Method Three: SCA-based Joint Optimization}}
The RCG algorithm requires multiple projections. If we directly optimize the phase shifts, the projection procedure is no longer needed. Then the problem \eqref{penalty_optimb} becomes an unconstraint non-convex problem, i.e.,
\begin{equation} \label{ques2}
\begin{array}{cc}
{\min \limits_{\bs\phi}} & {f(\bs\phi)=\sum \limits_{j=1}^K \sum \limits_{k=1}^K \left| (e^{j\bs\phi})^H\mathbf{d}_{k,j} e^{j\bs\phi}
- t_{k,j} \right|^2},
\end{array}
\end{equation}
where $\bs{\phi}=\angle{\mathbf{z}}$. Though the above problem is still difficult to solve optimally, we only need to solve its surrogate problem by exploiting the SCA technique, and the BCD method will converge to a stationary solution \cite{razaviyayn2013unified}. Specifically, denote the surrogate function for $f(\bs\phi)$ by $g(\bs \phi, \bar{\bs\phi})$. Then, $\bs\phi$ can be updated by solving the following surrogate problem
\begin{equation}
\bs\phi=\arg \min \limits_{\bs\phi \in \mathbb{R}^{F+M}} g(\bs \phi, \bar{\bs\phi}).
\end{equation}
The surrogate function $g(\bs \phi, \bar{\bs\phi})$ needs to satisfy following the two constraints \cite[Proposition 1]{razaviyayn2013unified}:
\begin{subequations}
\begin{align}
g(\bar{\bs\phi}, \bar{\bs\phi})=f(\bar{\bs \phi}), \\
g(\bs \phi, \bar{\bs \phi}) \geq f (\bs \phi). \label{surrogate2}
\end{align}
\end{subequations}
We can construct the surrogate function by the second order Taylor expansion:
\begin{equation}
g(\bs\phi, \bar{\bs\phi})=f (\bar{\bs\phi})+\nabla f (\bar{\bs\phi})^{T} (\bs\phi-\bar{\bs\phi})+\frac{1}{2\kappa}\|\bs\phi-\bar{\bs\phi}\|^{2},
\end{equation}
where $\nabla f (\bar{\bs\phi})$ is the gradient, and $\kappa$ is chosen to satisfy \eqref{surrogate2} locally within a bounded feasible set. Then, $\bs \phi$ is updated by
\begin{equation}
\bs \phi = \bar{\bs\phi}- \kappa \nabla f(\bar{\bs\phi}).
\end{equation}
In practice, the parameter $\kappa$ can be determined by the Armijo rule:
\begin{equation}\label{bu}
f(\bar{\bs\phi})- f(\bs\phi) \geq \zeta\kappa \|\nabla f(\bar{ \bs\phi}) \|^2,
\end{equation}
where $0<\zeta<0.5$, $\kappa$ is the largest element in $\{\beta\kappa_0^i\}_{i=0,1,\ldots}$ that makes \eqref{bu} satisfied,  $\beta>0$ and $0<\kappa_0<1$.

\subsubsection{Optimize $\{t_{k,j}\}$}
With other variables fixed, problem $\mathcal{P}_1(\rho)$ can be reduced to
\begin{subequations} \label{penalty_t}
    \begin{align}
        {\min \limits_{\{t_{k,j}\}}} \quad & { \sum \limits_{j=1}^K \sum \limits_{k=1}^K \left| \mathbf{h}_{k}^{H} \mathbf{\Theta} \mathbf{G} \mathbf{V} \mathbf{w}_j-t_{k,j} \right|^2} \\
        {\text { s.t. }} \quad & {\frac{ \left| t_{k,k} \right|^{2}} {\sum_{j \neq k}^{K}\left| t_{k,j} \right|^{2}+\sigma_{k}^{2}} \geq \gamma_{k}, \forall k \in \mathcal{K}}\label{22}.
    \end{align}
\end{subequations}
The objective function is convex over $\{t_{k,j}\}$. Although the constraints \eqref{22} are non-convex, they can be translated to the form of second-order cones as follows,
\begin{equation}
\sqrt{1+\frac{1}{\gamma_k}}t_{k,j} \geq \left\|\begin{array}{c}
\mathbf{A}_2^{H} \mathbf{e}_{k} \\
\sigma_{k}
\end{array}\right\|_{2}, \forall k \in \mathcal{K},
\end{equation}
where $\mathbf{A}_2\in \mathbb{C}^{K \times K}$ denotes a matrix with the entry in its $k$-th row and $j$-the column being $t_{k,j}$, i.e., $\mathbf{A}_2[k,j]=t_{k,j}$, and $\mathbf{e}_k\in \mathbb{C}^{K\times 1}$ denotes a vector with the $k$-th entry being one and others being zeros. Then, the problem \eqref{penalty_t} can be effectively and optimally solved by the SOCP method \cite{1561584}.

\subsection{Outer Layer: Update Penalty Factor}
The penalty factor $\rho$ is initialized to be a small number to find a good starting point, then gradually increased to tighten the penalty. Specifically,
\begin{equation}\label{penalty_rho}
    \rho:=\frac{\rho}{c}, 0 <c <1,
\end{equation}
where $c$ is a constant scaling parameter. A larger $c$ may lead to a more precise solution with a longer running time.

\subsection{Algorithm}
 \begin{algorithm}[t]
    \caption{Penalty-based Optimization Algorithm}
    \label{alg_penalty}
    \begin{algorithmic}[1]
    \State Initialize $\mathbf{V}$, $\mathbf{\Theta}$, $\rho$ and $\{t_{k,j}\},\forall k,j \in \mathcal{K}$.
    \Repeat
        \Repeat
            \State Update $\mathbf{W}$ by \eqref{penalty_optimw};
            \State Update $\mathbf{\Theta}$ and $\mathbf{V}$ by solving problem \eqref{optimy1};
            \State Update $\{t_{k,j}\}$ by solving problem \eqref{penalty_t};
        \Until The decrease of the objective value of problem $\mathcal{P}_1(\rho)$ is below threshold $\epsilon_2>0$.
        \State Update $\rho$ by \eqref{penalty_rho}.
    \Until The stopping indicator $\xi$ in \eqref{stop_criteria} is below threshold $\epsilon_3>0$.
    \State Project $\mathbf{\Theta}$ and $\mathbf{V}$ onto the discrete sets $\mathcal{S}_{r}$ and $\mathcal{S}_{a}$, respectively;
    \State Update $\mathbf{W}$ by solving problem \eqref{inde_w} with the projected $\mathbf{\Theta}$ and $\mathbf{V}$.
    \end{algorithmic}
 \end{algorithm}

The overall penalty-based optimization algorithm is summarized in \textit{Algorithm} \ref{alg_penalty}. Define the stopping indicator $\xi$ as follows,
\begin{equation}\label{stop_criteria}
\xi \triangleq \max \left\{ | \mathbf{h}_{k}^{H} \mathbf{\Theta} \mathbf{G} \mathbf{V} \mathbf{w}_j-t_{k,j} |^2, \forall k,j \in \mathcal{K} \right\}.
\end{equation}
When $\xi$ is below a pre-defined threshold $\epsilon_3>0$, the equality constraints \eqref{penalty_ori_const2} are considered to be satisfied and the proposed algorithm is terminated. Since we start with a small penalty and gradually increase its value, the objective value of problem $\mathcal{P}_1(\rho)$ is finally determined by the penalty part and the equality constraints are guaranteed to be satisfied. {Note that, for any given penalty factor $\rho$, the objective value of the problem $\mathcal{P}_1(\rho)$ solved through the BCD method is non-increasing over iterations in the inner layer. And the optimal value of the problem $\mathcal{P}_1(\rho)$ is bounded by the SINR constraints. Thereby, based on the \textit{Theorem 4.1} of the work \cite{7558213}, the proposed \textit{Algorithm} \ref{alg_penalty} is guaranteed to converge.}

{Let us consider the complexity of the proposed algorithm. Let us first compare the complexities of the three different methods, which are dominated by computing the Euclidean gradient. Thus, the complexity of Alternating Opt is $\mathcal{O}(I_\mathbf{b} K^2 F+ I_\mathbf{x} K^2 M)$, where $I_\mathbf{b}$ and $I_\mathbf{x}$ denote the required iteration times of the RCG algorithm to update $\mathbf{b}$ and $\mathbf{x}$, respectively. The complexity of RCG-based Joint Opt is $\mathcal{O}(I_\mathbf{z} K^2 (F+M))$, where $I_\mathbf{z}$ denotes the required iteration times of the RCG algorithm to update $\mathbf{z}$.  The complexity of SCA-based Joint Opt is $\mathcal{O}(I_a K^2 (F+M))$, where $I_a$ denotes the iteration number of the Armijo search. As will be shown in Section \ref{simulation_methods}, the RCG-based joint optimization method outperforms the other two methods. Thus, we adopt the RCG-based joint optimization method and analyze its complexity. It can be shown that the complexity of computing $\mathbf{W}$ in \eqref{penalty_optimw} is $\mathcal{O}(N^3+KN^2 +K^2N)$. Besides, the complexity of solving problem \eqref{penalty_t} is $\mathcal{O}(K^{3.5})$. Thereby, the overall complexity of \textit{Algorithm} \ref{alg_penalty} is $\mathcal{O}(I_{out}I_{in}(N^3+KN^2 +K^2N +I_{\mathbf{z}}K^2 (F+M) +K^{3.5}))$ where $I_{out}$ and $I_{in}$ denote the outer and inner iteration times required for convergence, respectively.}


\section{Sequential Optimization}
\label{individual}
{To reduce the complexity of solving the problem $\mathcal{P}_0$, we develop a sequential optimization approach in this section. Specifically, we first optimize the RIS response matrix $\mathbf{\Theta}$, then optimize the analog beamformer $\mathbf{V}$, and finally optimize the digital beamformer $\mathbf{W}$ without iteration.}

\subsection{RIS Design}
\label{ris_design}
Looking at the SINR constraints \eqref{const1}, and we can get
\begin{equation}\label{ris1}
|\mathbf{h}_k^H \mathbf{\Theta} \mathbf{G}\mathbf{V}\mathbf{w}_k| - \gamma_k \sum _{j\not=k} |\mathbf{h}_{k}^H \mathbf{\Theta} \mathbf{G} \mathbf{V}\mathbf{w}_j| \geq 0, \forall k \in \mathcal{K}.
\end{equation}
For simplicity, let the transmit beamforming vectors at the BS be set based on the maximum-ratio transmission (MRT) principle, i.e., $\mathbf{V}\mathbf{w}_k= (\mathbf{h}_k^H \mathbf{\Theta} \mathbf{G})^H$. {Note that the transmit beamforming vectors here are only used to extract the optimization of the RIS response matrix. The actually adopted transmit beamforming vectors are designed later.} Then, the problem \eqref{ris1} is translated to
\begin{equation}\label{ris2}
\|\mathbf{h}_k^H \mathbf{\Theta} \mathbf{G}\|^2 - \gamma_k \sum _{j\not=k} \|\mathbf{h}_{k}^H \mathbf{\Theta} \mathbf{G} \mathbf{G}^H \mathbf{\Theta}^H \mathbf{h}_j\| \geq 0, \forall k \in \mathcal{K}.
\end{equation}
{The inequality \eqref{ris2} should be satisfied for all users. Therefore, in order to ensure the receive signal quality of each user, we maximize the worst case of the left-hand side of \eqref{ris2} among all users, i.e.,}
\begin{subequations}\label{max_min_ris}
        \begin{align}
            {\max \limits_{\mathbf{\Theta}}}\quad &{\min \limits_{k\in \mathcal{K}} \ \|\mathbf{h}_k^H \mathbf{\Theta} \mathbf{G}\|^2 - \gamma_k \sum _{j\not=k} \|\mathbf{h}_{k}^H \mathbf{\Theta} \mathbf{G} \mathbf{G}^H \mathbf{\Theta}^H \mathbf{h}_j\|} \\
            {\text { s.t. }}\quad & {b_f \in \mathcal{S}_{r}, \forall f \in \mathcal{F}}.
        \end{align}
\end{subequations}
The SDR technique can be adopted to solve the above problem. Specifically, let us introduce an auxiliary variable $\varpi$, and let $\mathbf{B}=\mathbf{b}\mathbf{b}^H$. After dropping the rank-one constraint, the problem \eqref{max_min_ris} can be relaxed into
\begin{subequations}\label{max_min_ris2}
        \begin{align}
            {\max \limits_{\mathbf{B}, \varpi}}\quad &{\varpi} \\
            {\text { s.t. }} \quad& { \text{Tr}(\mathbf{\eta}_k \eta_k^H \mathbf{B}) \geq \varpi + \gamma_k \sum \limits_{j\not=k} \|\text{Tr}(\zeta_{k,j} \mathbf{B})\|, \forall k \in \mathcal{K}, }\\
            {}&{\mathbf{B}_{f,f}=1, \forall f \in \mathcal{F}},\\
            {}&{\mathbf{B} \succeq 0, }
        \end{align}
\end{subequations}
where $\eta_k=\text{diag}(\mathbf{h}_k^H)\mathbf{G}\in \mathbb{C}^{F\times M}$ and $\zeta_{k,j}=\text{diag}(\mathbf{h}_k^H) \mathbf{G} \mathbf{G}^H \text{diag}(\mathbf{h}_j) \in \mathbb{C}^{F\times F}$. The problem \eqref{max_min_ris2} is convex and can be optimally solved by a standard convex solver such as CVX\cite{grant2014cvx}. After solving the problem \eqref{max_min_ris2}, the optimal $\mathbf{B}$ can be obtained. Then, we need to obtain the value of $\mathbf{b}$, which has the direct relationship to $\mathbf{B}$. Generally, there is no guarantee that the relaxed problem \eqref{max_min_ris2} has a rank-one optimal solution. If $\text{rank}(\mathbf{B}) = 1$, then we can obtain the optimal $\mathbf{b}$ by taking the eigenvalue decomposition of $\mathbf{B}$. Otherwise, if $\text{rank}(\mathbf{B}) > 1$, an additional Gaussian randomization procedure is needed to produce a rank-one solution \cite{5447068,wu2019intelligent}. Specifically, suppose that the eigenvalue decomposition of $\mathbf{B}$ is $\mathbf{B} = \mathbf{U}\Sigma\mathbf{U}^H$. Then, let $\overline{\mathbf{b}} = \mathbf{U}\Sigma^{1/2}\mathbf{r}$, where $\mathbf{r} \sim \mathcal{C} \mathcal{N}\left(0, \boldsymbol{I}_{F}\right)$. Finally, project $\overline{\mathbf{b}}$ to the pre-defined set $\mathcal{S}_r$, i.e.,
\begin{equation}
    b_f=e^{j\angle b_f},
\end{equation}
where $\angle b_f =\arg \min \limits_{\angle b_f \in \mathcal{S}_{r}} |\angle b_f  -\angle \overline{b}_f|$. With many independently generated $\mathbf{r}$, the one that makes $\varpi$ maximum is taken as the solution.

\subsection{Analog Beamforming Design}
{We then optimize the analog beamforming after the RIS has been configured.} The orthogonal match pursuit (OMP) method is widely adopted to design the analog beamformer\cite{6717211}. If the BS adopts the fully digital beamforming structure, the optimal digital beamforming $\mathbf{W}_{\text {opt}}$ can be obtained by solving the following problem
\begin{subequations} \label{ind_wopt}
        \begin{align}
            {\min \limits_{\mathbf{W}}}\quad &{D\sum \limits_{k=1}^{K}\left\| \mathbf{w}_{k}\right\|^{2}} \\
            {\text { s.t. }} \quad&  { \frac{ |\mathbf{h}_{k}^H \mathbf{\Theta} \mathbf{G} \mathbf{w}_k|^2}{\sum \limits_{j\not= k} |\mathbf{h}_{k}^H \mathbf{\Theta} \mathbf{G} \mathbf{w}_j|^2 + \sigma^2_k} \geq \gamma_k, \forall k \in \mathcal{K}}.
        \end{align}
\end{subequations}
{Note} that the above problem can be optimally solved by the SOCP method. {We adopt an overlapping codebook $\mathbf{A}$ with an overlapping coefficient $\mu$ to improve the spatial resolution due to the limited resolution of the conventional DFT codebook \cite{8362957}. A larger $\mu$ represents higher resolution of the codebook.} The codebook can be represented as $\mathbf{A}=[\mathbf{a}_B (\psi_1, \phi_1), \ldots, \mathbf{a}_B (\psi_1, \phi_{\mu N_z}), \ldots, \newline \mathbf{a}_B (\psi_{\mu N_y}, \phi_{\mu N_z})]$, where $N_y$ and $N_z$ denote the horizontal and vertical lengths, $\psi_i= \frac{2 \pi(i-1)}{\mu N_{y}}, i=1,2,\ldots,\mu N_y$ and
$\phi_j =\frac{2 \pi(j-1)}{\mu N_{z}}, j=1,2,\ldots,\mu N_z$, respectively.
Then, we can use a selection matrix $\mathbf{T}\in \mathbb{R}^{\mu^2 N_y N_z \times N}$ to select proper columns. Specifically, the analog beamforming problem can be formulated as
\begin{subequations} \label{analog_codebook}
    \begin{align}
        {\mathbf{T}^*=\underset{\mathbf{T}, \mathbf{F}_{BB}}{\arg \min } } \quad & {  \left\|\mathbf{W}_{\text {opt}}- \mathbf{A}_t \mathbf{T} \mathbf{F}_{BB} \right\|_{F} }\\
         {\text{ s.t. }}\quad & {\left\|\operatorname{diag}\left(\mathbf{T} \mathbf{T}^{H}\right)\right\|_{0}= N, }
    \end{align}
\end{subequations}
{where $\mathbf{A}_t= \mathbf{e}_{t} \odot \mathbf{A}, t\in \mathcal{N}$, and $\mathbf{e}_t$ is an $M\times1$ zero-vector with the entry from $(t-1)D+1$ to $tD$ being one.} Since the structure of analog beamforming is sub-connected, we use $\mathbf{e}_t$ to modify the codebook. Then, the OMP method can be applied to obtain the selection matrix $\mathbf{T}^*$. The analog beamforming can be recovered, i.e., $\mathbf{V}=\mathbf{A}_t \mathbf{T}^*$.  Finally, the discrete analog beamforming can be obtained by mapping $\mathbf{V}$ to the nearest discrete value in $\mathcal{S}_{a}$.

\subsection{Digital Beamforming Design}
After obtaining the RIS phase shifts and the analog beamforming vector, we need to obtain the optimal digital beamforming matrix. The digital beamforming can be obtained by solving following problem
\begin{subequations} \label{inde_w}
        \begin{align}
            {\min \limits_{\mathbf{W}}}\quad &{D\sum \limits_{k=1}^{K}\left\| \mathbf{w}_{k}\right\|^{2}} \\
            {\text { s.t. }} \quad&  { \frac{ |\mathbf{h}_{k}^H \mathbf{\Theta} \mathbf{G} \mathbf{V} \mathbf{w}_k|^2}{\sum \limits_{j\not= k} |\mathbf{h}_{k}^H \mathbf{\Theta} \mathbf{G} \mathbf{V} \mathbf{w}_j|^2 + \sigma^2_k} \geq \gamma_k, \forall k \in \mathcal{K}}.
        \end{align}
\end{subequations}
{Note that the digital beamforming $\mathbf{W}_{\text {opt}}$ obtained by solving the problem \eqref{ind_wopt} is only used for the analog beamforming design. }The problem \eqref{inde_w} is the conventional power minimization problem in the multiple-input-single-output (MISO) system, which can be effectively and optimally solved by the SOCP method \cite{1561584}.

{Here, we consider the complexity of the sequential optimization. The complexity of the RIS design is dominated by the SDR technique, which is $\mathcal{O}(F^6)$ \cite{ben2001lectures}. The complexity of the analog beamforming is dominated by the OMP technique, which is $\mathcal{O}(\mu^2 MFN^3)$. The complexity of the digital beamforming design is $\mathcal{O}(N^{3.5} K^{3.5})$ \cite{7961152}. Thus, the overall computational complexity of the Sequential Optimization is $\mathcal{O}(F^6+\mu^2 MFN^3+N^{3.5} K^{3.5})$. The advantage of this algorithm is that it does not need to perform iterative operations.}

\section{Extension to the Max-Min Fairness Problem}
\label{discussion}
A closely related problem of the QoS problem $\mathcal{P}_0$ is the MMF problem, which aims to maximize the performance of the worse-case user under a fixed total transmit power budget. In this section, we discuss the relationship between the QoS problem and the MMF problem, and the extension of the proposed algorithm to solve the MMF problem. In specific, the MMF problem is to maximize the weighted minimum SINR under a total power budget $P_T$, which can be formulated as
\begin{subequations}\label{prob_original2}
    \begin{align}
            {\mathcal{Q}_0: \max \limits_{\{\mathbf{V},\mathbf{W},\mathbf{\Theta}\}} \min \limits_{k\in \mathcal{K}}} \quad & { \frac{1}{\gamma_k}\frac{\arrowvert{\mathbf{h}_{k}^H\mathbf{\Theta}\mathbf{G}\mathbf{V}\mathbf{w}_k} \arrowvert^2}{\sum\limits_{j\not=k}\arrowvert{\mathbf{h}_{k}^H\mathbf{\Theta}\mathbf{G} \mathbf{V}\mathbf{w}_j}\arrowvert^2+\sigma_k^2} } \\
            {\text { s.t. }}\quad\quad & {D\sum \limits_{k=1}^K \|\mathbf{w}_k\|^2 \leq P_T }, \label{prob_f_cons1}\\
            {}&{\eqref{const2},\eqref{const3}}
    \end{align}
\end{subequations}
where $\gamma_k>0$ denotes the weight parameter of user $k$. A larger value of $\gamma_k$ indicates that user $k$ has a higher priority in transmission.


Let us compare the problem $\mathcal{P}_0$ and  the problem $\mathcal{Q}_0$. Let $\bs{\gamma}\triangleq[\gamma_1, \gamma_2,\ldots, \gamma_K]^T$. For a given set of channels and noise powers, $\mathcal{P}_0$ is parameterized by $\bs{\gamma}$. We use the notation $\mathcal{P}_0 (\bs{\gamma})$ to account for this, and $P_T=\mathcal{P}_0 (\bs{\gamma})$ to denote the associated minimum power. Similarly, $\mathcal{Q}_0$ is parameterized by $\bs{\gamma}$ and $P_T$. Then, $\mathcal{Q}_0 (\bs{\gamma}, P_T)$ and $\xi=\mathcal{Q}_0 (\bs{\gamma}, P_T)$ are used to represent the dependence and the associated maximum worst-case weighted SINR, respectively. Similar to \cite{1634819,4443878}, we have the following proposition.
\begin{proposition}\label{remark}
The QoS problem $\mathcal{P}_0$ and the MMF problem $\mathcal{Q}_0$ have the following relationship:
\begin{subequations}
    \begin{align}
         \xi=\mathcal{Q}_0(\bs{\gamma},\mathcal{P}_0(\xi\bs{\gamma})),\label{relation1}\\
         P_T=\mathcal{P}_0(\mathcal{Q}_0(\bs{\gamma},P_T)\bs{\gamma}).\label{relation2}
    \end{align}
\end{subequations}
\end{proposition}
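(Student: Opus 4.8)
The plan is to establish the two identities \eqref{relation1} and \eqref{relation2} by a monotonicity/scaling argument, exploiting the fact that both problems share the same feasible set for the unit-modulus variables $\{\mathbf{V},\mathbf{\Theta}\}$ and that the SINR functional is invariant under common scaling of the digital beamformers in a controlled way. First I would record the key structural fact: for fixed $\{\mathbf{V},\mathbf{\Theta}\}$, if $\mathbf{W}$ is scaled to $\alpha\mathbf{W}$ with $\alpha>0$, every $\text{SINR}_k$ value changes only through the noise term; more precisely, scaling $\mathbf{W}$ by $\alpha$ and simultaneously scaling the power budget so that $D\sum_k\|\alpha\mathbf{w}_k\|^2 = \alpha^2 P_T$, the achievable worst-case weighted SINR scales in a predictable way. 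This is the same scaling behaviour used in the classical downlink duality references \cite{1634819,4443878}, adapted here to the hybrid/RIS setting.

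The key steps, in order: (i) Show $\mathcal{Q}_0(\bs\gamma,\cdot)$ is strictly increasing and continuous in the power budget $P_T$ (for fixed channels), and that $\mathcal{P}_0(\cdot)$ is strictly increasing and continuous along rays $\xi\bs\gamma$ in the SINR-target scaling $\xi$; both follow because any feasible point for a smaller budget (resp.\ looser target) is feasible for a larger one, and a strict improvement is always possible by scaling $\mathbf{W}$. (ii) Prove \eqref{relation1}: Let $\xi = \mathcal{Q}_0(\bs\gamma,\mathcal{P}_0(\xi\bs\gamma))$ be the claim. Take an optimal $\{\mathbf{V},\mathbf{\Theta},\mathbf{W}\}$ for $\mathcal{P}_0(\xi\bs\gamma)$; it uses power exactly $\mathcal{P}_0(\xi\bs\gamma)$ (optimality forces the budget to be tight, else scale $\mathbf{W}$ down) and achieves $\text{SINR}_k \ge \xi\gamma_k$ for all $k$, hence $\min_k \frac{1}{\gamma_k}\text{SINR}_k \ge \xi$, so the same point is feasible for $\mathcal{Q}_0(\bs\gamma,\mathcal{P}_0(\xi\bs\gamma))$ with objective $\ge\xi$; this gives $\mathcal{Q}_0(\bs\gamma,\mathcal{P}_0(\xi\bs\gamma)) \ge \xi$. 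For the reverse, take an optimal point of $\mathcal{Q}_0(\bs\gamma,\mathcal{P}_0(\xi\bs\gamma))$ achieving worst-case weighted SINR $\xi' \ge \xi$; then $\text{SINR}_k \ge \xi'\gamma_k \ge \xi\gamma_k$, so this point is feasible for $\mathcal{P}_0(\xi\bs\gamma)$ and uses power $\le \mathcal{P}_0(\xi\bs\gamma)$. If $\xi' > \xi$ one derives a contradiction: by the strict monotonicity in step (i) a point meeting targets $\xi'\bs\gamma$ can be found at strictly lower power than $\mathcal{P}_0(\xi\bs\gamma)\ge\mathcal{P}_0(\xi'\bs\gamma)$ would allow only if $\xi'\le\xi$, forcing $\xi'=\xi$. (iii) Prove \eqref{relation2} by the symmetric argument, starting from an optimal solution of $\mathcal{Q}_0(\bs\gamma,P_T)$, setting $\xi = \mathcal{Q}_0(\bs\gamma,P_T)$, noting it meets targets $\xi\bs\gamma$ at power $\le P_T$ so $\mathcal{P}_0(\xi\bs\gamma)\le P_T$, and using tightness of the power constraint at optimality plus strict monotonicity to rule out strict inequality.

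The main obstacle I expect is justifying the tightness and strict-monotonicity claims rigorously in the presence of the non-convex discrete/unit-modulus constraints \eqref{const2}--\eqref{const3}: the scaling argument only touches $\mathbf{W}$, which is unconstrained apart from the power budget, so the unit-modulus blocks $\{\mathbf{V},\mathbf{\Theta}\}$ are simply carried along unchanged and cause no difficulty — but one must be careful that the optimal power in $\mathcal{P}_0$ is actually attained (so that ``optimal solution'' makes sense) and that at an optimizer the power constraint in $\mathcal{Q}_0$ and the SINR constraints in $\mathcal{P}_0$ are active. Attainment follows since, after relaxing to continuous phases, the feasible set for $\{\mathbf{V},\mathbf{\Theta}\}$ is compact and one can restrict $\mathbf{W}$ to a compact ball without loss; activeness follows from the scaling freedom in $\mathbf{W}$. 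Once these are in place, identities \eqref{relation1}--\eqref{relation2} reduce to the elementary bijection between a strictly increasing function and its inverse, exactly as in \cite{1634819,4443878}.
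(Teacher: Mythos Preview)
Your proposal is correct and follows essentially the same route as the paper's proof: both arguments establish \eqref{relation1} by showing that an optimal solution of $\mathcal{P}_0(\xi\bs\gamma)$ is feasible for $\mathcal{Q}_0(\bs\gamma,\mathcal{P}_0(\xi\bs\gamma))$ with objective at least $\xi$, and then derive a contradiction to strict inequality by scaling only the digital beamformer $\mathbf{W}$ (leaving the unit-modulus blocks $\mathbf{V},\mathbf{\Theta}$ untouched) to violate the optimality of $\mathcal{P}_0(\xi\bs\gamma)$. Your version is more explicit about the monotonicity, attainment, and constraint-activeness prerequisites, which the paper leaves implicit, but the underlying mechanism is identical.
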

\begin{proof}
Contradiction argument is used to prove \eqref{relation1}. For the problem $\mathcal{P}_0(\xi\bs{\gamma})$, denote the optimal solution and the associated optimal value as $\{\mathbf{W}^{\mathcal{P}_0}, \mathbf{\Theta}^{\mathcal{P}_0}, \mathbf{V}^{\mathcal{P}_0} \}$ and $P_T^{\mathcal{P}_0}$, respectively. It is observed that the set $\{\mathbf{W}^{\mathcal{P}_0}, \mathbf{\Theta}^{\mathcal{P}_0}, \mathbf{V}^{\mathcal{P}_0} \}$ is also a feasible solution with the objective value $\xi$ to the problem $\mathcal{Q}_0(\bs{\gamma}, P_T^{\mathcal{P}_0})$. Since $\mathbf{\Theta}$ and $\mathbf{V}$ have unit-modulus constraints, we can only scale $\mathbf{W}$. Assume there is another solution $\{\mathbf{W}^{\mathcal{Q}_0}, \mathbf{\Theta}^{\mathcal{P}_0}, \mathbf{V}^{\mathcal{P}_0} \}$ with bigger objective value $\xi^{\mathcal{Q}_0}>\xi$. Then, we can appropriately scale down the digital beamforming with the SINR constraints of the problem $\mathcal{P}_0(\xi\bs{\gamma})$ still satisfied. The resulting solution $\{c\mathbf{W}^{\mathcal{Q}_0}, \mathbf{\Theta}^{\mathcal{P}_0}, \mathbf{V}^{\mathcal{P}_0} \}(0<c<1)$ has a smaller transmit power than $P_T^{\mathcal{P}_0}$, which contradicts the optimality of $\{\mathbf{W}^{\mathcal{P}_0}, \mathbf{\Theta}^{\mathcal{P}_0}, \mathbf{V}^{\mathcal{P}_0} \}$. \eqref{relation2} can be proved in the similar way and the details are omitted.
\end{proof}

Generally, the MMF problem $\mathcal{Q}_0$ is more difficult to solve than the QoS problem $\mathcal{P}_0$ due to {the non-smooth} objective function. Based on Proposition \ref{remark}, we can solve the MMF problem by solving a series of QoS problems. Specifically, let us consider the following problem $\mathcal{P}_2 (\varsigma)$, i.e.,
    \begin{subequations}
    \begin{align}
            {\mathcal{P}_2 (\varsigma): \min \limits_{\{\mathbf{V},\mathbf{W},\mathbf{\Theta}\}}} \quad & {D\sum \limits_{k=1}^{K}\left\| \mathbf{w}_{k}\right\|^{2}} \\
            {\text { s.t. }}\quad\quad & {\text{SINR}_{k} \geq \varsigma \gamma_{k}, \forall k\in\mathcal{K}}, \\
            {}&{\eqref{const2}, \eqref{const3}}.
    \end{align}
    \end{subequations}
For a given set of channels, noise powers and $\bs{\gamma}$, $\mathcal{P}_2$ is parameterized by $\varsigma$. Note that the problem $\mathcal{P}_2 (\varsigma)$ is a linear function over $\varsigma$. A larger $\varsigma$ leads to a larger objective value of $\mathcal{P}_2$. Thus, in order to solve the problem $\mathcal{Q}_0$, we can do a bisection search over $\varsigma$ of the problem $\mathcal{P}_2 $ until its objective value is $P_T$. Then, the corresponding result is the solution to $\mathcal{Q}_0$ with the total power budget being $P_T$.

\section{Simulation Results}
\label{sec_simulation}
\begin{figure}[t]
\begin{centering}
\vspace{-0.5cm}
\includegraphics[width=0.45\textwidth]{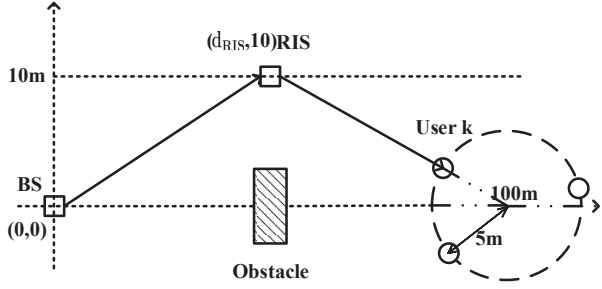}
 \caption{The simulated RIS-aided communication scenario.}\label{simulation_setup}
\end{centering}
\vspace{-0.3cm}
\end{figure}
In this section, we evaluate the performance of our proposed algorithms. {We consider an RIS-aided multiuser mmWave communication system which operates at 28 GHz with bandwidth 251.1886 MHz. Thus, the noise power is $\sigma_k^2=-174+10 \log _{10} B=-90$ dBm.} We consider a $6\times 6$ UPA structure at the BS with $N=6$ RF chains and a total of $M=36$ antennas located at (0 m, 0 m) as shown in Fig.~\ref{simulation_setup}. The RIS is located at ($d_{RIS}$ m, 10 m) and equipped with $F_1 \times F_2$ unit cells where $F_1=6$ and $F_2$ can vary. Users are uniformly and randomly distributed in a circle centered at (100 m, 0 m) with radius 5 m. As for the mmWave channel, we set ${N_{\text{cl}}}_1={N_\text{cl}}_2=5$ clusters, ${N_\text{ray}}_1={N_\text{ray}}_2=10$ rays per cluster; the azimuth and elevation angles of arrival and departure follow the Lapacian distribution with an angle spread of 10 degrees; the complex gain $\alpha_{il}$ and $\beta_{il}$ follow the complex Gaussian distribution $\mathcal{CN}(0,10^{-0.1PL(d)})$, and $PL(d)$ can be modeled as\cite{akdeniz2014millimeter}:
\begin{equation}
PL(d)=\varphi_a + 10 \varphi_b \log_{10} (d) + \varphi_c (\text{dB}),
\end{equation}
where $\varphi_c \sim \mathcal{N}\left(0, \sigma^{2}\right)$, $\varphi_a=72.0,\varphi_b=2.92$ and $\sigma=8.7$dB. The auxiliary variables $\{t_{k,j}\}$ are initialized following $\mathcal{C}\mathcal{N}(0,1)$. The penalty factor is initialized as $\rho=10^{-3}$. Other system parameters are set as follows unless specified otherwise later: $K=3, F_2=6, d_{RIS}=50, c=0.9, \epsilon_1=\epsilon_3=10^{-7}, \epsilon_2=10^{-4}, \gamma_k=10$dB,  $\forall k \in \mathcal{K}$. All simulation curves are averaged over $100$ independent channel realizations. {The simulations are carried out on a computer with Intel i7-7700 CPU at 3.60 GHz and with 16.0 GB RAM.}

\begin{figure}[t]
\begin{centering}
\vspace{-0.1cm}
\includegraphics[width=0.45\textwidth]{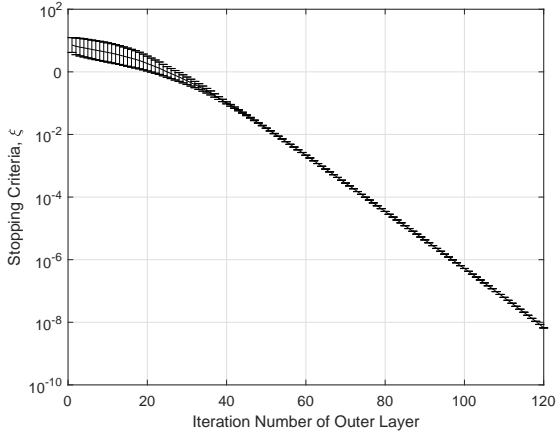}
 \caption{\small{Stopping indicator of the penalty-based algorithm.}}\label{fig_conv_penalty1}
\end{centering}
\vspace{-0.4cm}
\end{figure}

\begin{figure}[t]
\begin{centering}
\includegraphics[width=.45\textwidth]{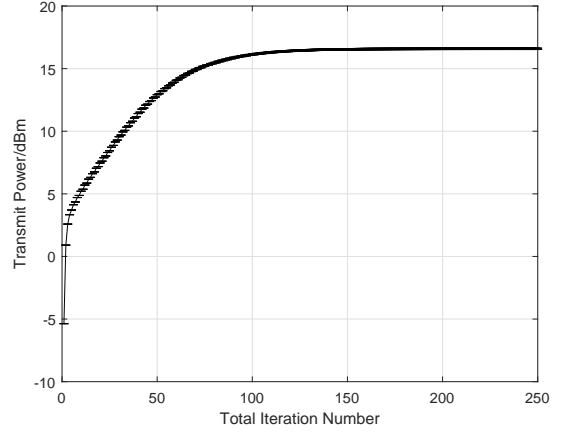}
 \caption{\small{Convergence of the penalty-based algorithm.}}\label{fig_conv_penalty2}
\end{centering}
\vspace{-0.2cm}
\end{figure}
\subsection{Convergence Performance of the Penalty-based Algorithm}
First, let us look at the convergence performance of the penalty-based algorithm. We show the stopping indicator \eqref{stop_criteria} of the penalty-based algorithm in Fig.~\ref{fig_conv_penalty1} and the average convergence of the penalty-based algorithm in Fig.~\ref{fig_conv_penalty2} in the case of continuous phase shifts of analog beamformer and RIS coefficients. These curves are plotted with the average plus and minus the standard deviation. {Note that the transmit power increases as the total number of iterations increases. This is because that a larger $\rho$ corresponding to a larger penalty for violating the equality restrictions, necessitating a higher transmit power to reduce the penalty term.} It is observed that the stopping indicator can always meet the predefined accuracy $10^{-7}$ after about 110 outer layer iterations in Fig.~\ref{fig_conv_penalty1}. Thus, the solutions obtained by \textit{Algorithm} \ref{alg_penalty} satisfy all SINR constraints. Fig.~\ref{fig_conv_penalty2} shows that the proposed algorithm converges after about 200 total iterations, which means that the inner layer runs averagely 2 times.

\subsection{Performance and Computational Comparison of Solving Problem \eqref{optimy1} by Different Methods}
\label{simulation_methods}
\begin{figure}[t]
\begin{centering}
\includegraphics[width=.45\textwidth]{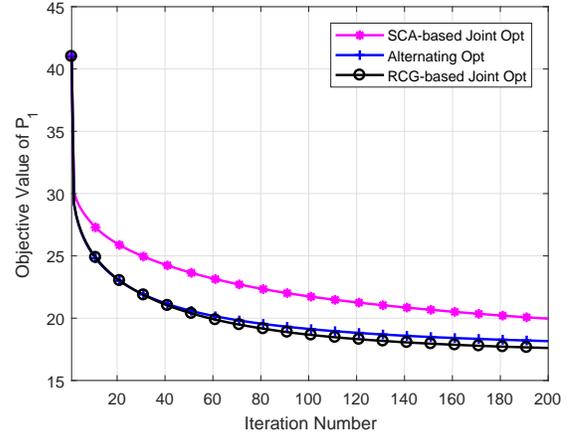}
 \caption{{Convergence comparison with fixed penalty $\rho=1$ when solving problem \eqref{optimy1} by different methods.}}\label{conv1}
\end{centering}
\end{figure}

\begin{figure}[t]
\begin{centering}
\includegraphics[width=.45\textwidth]{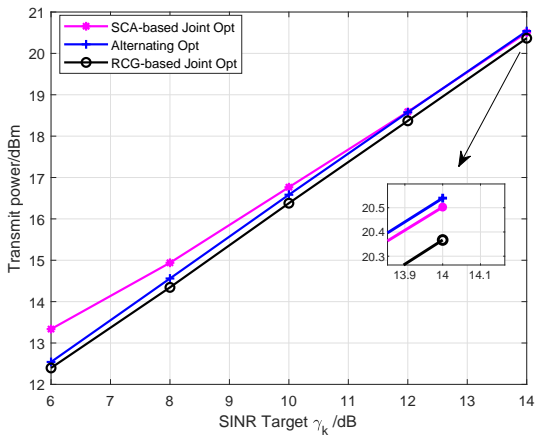}
 \caption{{Transmit power versus SINR targets when solving problem \eqref{optimy1} by different methods.}}\label{conv2}
\end{centering}
\vspace{-0.2cm}
\end{figure}
{We first compare the performance of different methods of solving problem \eqref{optimy1} as described in Section \ref{manifold_b}. Fig.~\ref{conv1} illustrates the objective value of $\mathcal{P}_1$ versus the iteration number when the penalty factor $\rho$ is fixed to one. Fig.~\ref{conv2} illustrates the transmit power versus SINR targets.
Though the optimal solution can be obtained for each subproblem in alternating optimization, it converges to a worse local optimum compared with the RCG-based joint optimization  as shown in Fig.~\ref{conv1} and Fig.~\ref{conv2}. Though the SCA-based joint optimization does not require projection, it performs worse than the RCG-based joint optimization as shown in Fig.~\ref{conv1} and Fig.~\ref{conv2}. It is also seen from Fig.~\ref{conv2} that the gap between the SCA-based method and the RCG-based method decreases as the SINR targets increase. However, we have tested the results when the SINR target is 20dB, the RCG-based method still outperforms the SCA-based method. }

{We further compare the computational time with fixed penalty $\rho=1$ when solving problem (15) by different methods in Table \ref{compu_comp2}. Here, we set the RIS $F_1 \times F_2$ unit cells where $F_1=5$ and $F_2$ can vary. It is found that the SCA-based Joint Opt runs the fastest, while the Alternating Opt runs the slowest.}

{Overall, the RCG-based Joint Opt converges to the best point, and the time consumed is somewhere in the middle. Therefore, the RCG-based Joint Opt is a good choice among the three methods. In the following, we adopt the RCG-based joint optimization method.}

\begin{table}
\centering
\begin{tabular}{|c|c|c|c|c|}
\hline \multirow{2}{*} {} & \multicolumn{4}{|c|} {\text { Running time (s) }} \\
\cline { 2 - 5 } & F=10 & F=20 & F=40 & F=80 \\
\hline \text { Alternating Opt } & 142.9865 & 152.4747 & 157.7086 & 163.2711 \\
\hline \text { RCG-based Joint Opt } & 134.5921 & 142.8421 & 143.1132 & 148.7838 \\
\hline \text { SCA-based Joint Opt } & 103.1461 & 104.0654 & 105.5587 & 110.4485 \\
\hline
\end{tabular}
\caption{Computational Time Comparison.} \label{compu_comp2}
\end{table}

\subsection{Influence of Discrete Phase Shifts}
\begin{figure}[t]
\begin{centering}
\includegraphics[width=.45\textwidth]{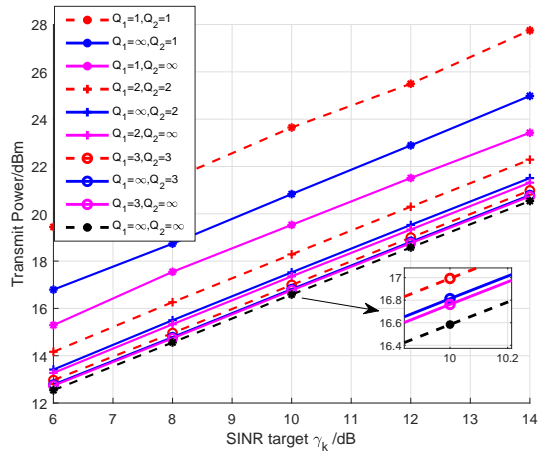}
 \caption{Influence of discrete phase shifts.}\label{fig_discrete1}
\end{centering}
\vspace{-0.5cm}
\end{figure}
We consider that the number of control bits at the analog beamformer and at the RIS, i.e., $Q_1$ and $Q_2$, can be designed separately, and each can take values from $\{1, 2, 3, \infty\}$, where $\infty$ corresponds to continuous phase shifts. Fig.~\ref{fig_discrete1} shows that when there is only one control bit for both analog beamformer and RIS, i.e. $Q_1 = Q_2 = 1$, the power gap to the ideal case with continuous phase shifts is up to 7 dB; when $Q_1 = Q_2 = 2$ and $Q_1 = Q_2 = 3$, the gap reduces quickly to 1.5 dB and 0.4dB, respectively. This suggests that having 3 bits for the discrete phase shifts is enough in practice. It is also seen from Fig.~\ref{fig_discrete1} that the BS is more robust to the discrete phase shifts than the RIS. In specific, the performance at $Q_1=1, Q_2 = \infty$ is about 2 dB better than that at $Q_1 = \infty, Q_2= 1$.
{We believe that the analog beamforming at the BS has a larger dimension of regulation than the RIS. Specifically, the analog beamforming contains many RF chains and each RF chain can serve one user, while all users are served by the same RIS. Therefore, the BS is more robust to the discrete phase shifts than the RIS.
}

\subsection{Performance Comparison with Other Schemes}
To demonstrate the efficiency of the proposed algorithms and to reveal some design insights, we compare the performance of the following algorithms when $Q_{1}=3$ and $Q_2=3$.
\begin{itemize}
\item Penalty-Manifold joint design with hybrid beamforming structure (Penalty-Manifold HB):  This is the proposed \textit{Algorithm} \ref{alg_penalty} for joint design of hybrid beamforming and RIS phase shifts.

\item Penalty-Manifold joint design with fully digital beamforming structure (Penalty-Manifold FD): This is the proposed \textit{Algorithm} \ref{alg_penalty} but changing the hybrid beamforming to the fully digital beamforming at the BS. This is done by setting $D=1$.

\item Penalty-Manifold joint design with random $\mathbf{\Theta}$ (Random $\mathbf{\Theta}$):  The phase shifts at the RIS are randomly selected to be feasible values. Then the hybrid beamforming matrices $\{\mathbf{W},\mathbf{V}\}$ at the BS are obtained by using the penalty-manifold joint algorithm as in \textit{Algorithm} \ref{alg_penalty}, where the update of $\mathbf{\Theta}$ is skipped. This is to find out the significance of optimizing the phase shifts at the RIS.

\item Penalty-Manifold joint design with SDR $\mathbf{\Theta}$ (SDR $\mathbf{\Theta}$):  The phase shifts at the RIS are designed by using the SDR approach as stated in Section \ref{ris_design}. Then the hybrid beamforming matrices $\{\mathbf{W}, \mathbf{V}\}$ at the BS are obtained by using the penalty-manifold joint algorithm as in \textit{Algorithm} \ref{alg_penalty}, where the udpate of $\mathbf{\Theta}$ is skipped. This is again to find out the significance of optimizing the phase shifts at the RIS.

\item BCD-SDR joint design (BCD-SDR):  The conventional BCD method in conjunction with the SDR method, as mentioned in Section \ref{pro_formulation}.

\item  Sequential design:  the proposed sequential design where RIS phase shifts, analog beamforming, and digital beamforming are optimized sequentially in Section \ref{individual}. {In order to make the sequential optimization method be more effective, we try different overlapping coefficients $\mu$ from 1 to 4 and let the best result be the final solution.}
\end{itemize}

\begin{figure}[t]
\begin{centering}
\includegraphics[width=.45\textwidth]{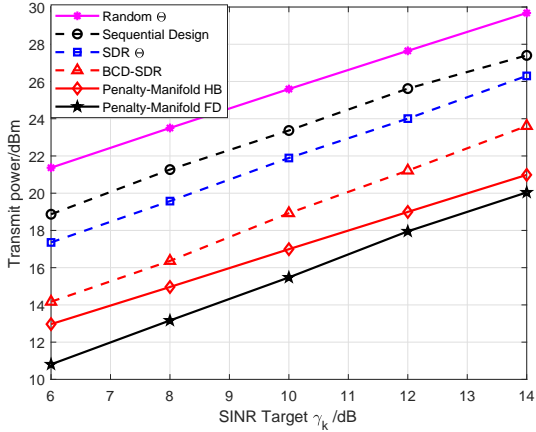}
 \caption{\small{Transmit power versus SINR targets.}}\label{fig_multi_sinr}
\end{centering}
\end{figure}

\begin{figure}[t]
\begin{centering}
\includegraphics[width=.45\textwidth]{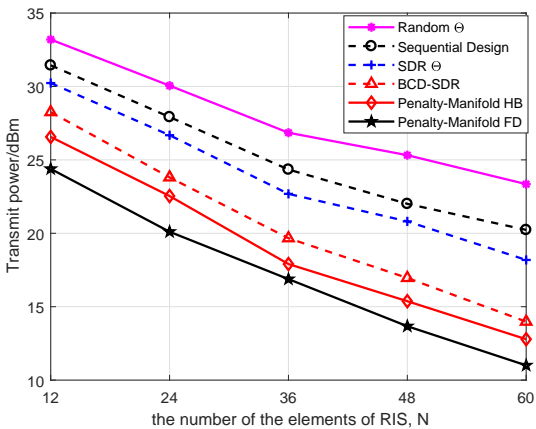}
 \caption{\small{Transmit power versus the number of the elements of RIS.}} \label{fig_ele}
\end{centering}
\end{figure}

\begin{figure}[t]
\begin{centering}
\includegraphics[width=.45\textwidth]{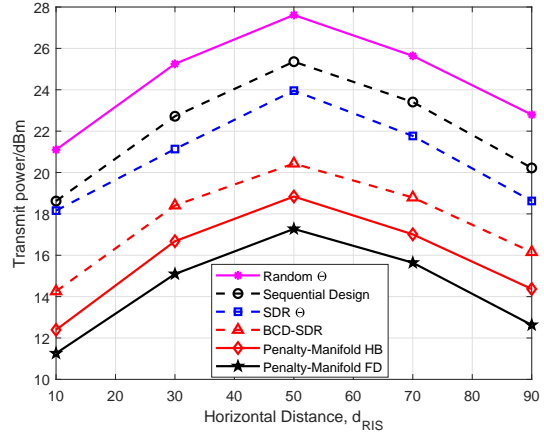}
 \caption{\small{Transmit power versus the horizontal distance of RIS.}}\label{fig_distance}
\end{centering}
\end{figure}

\begin{table}[t]
\centering
\begin{tabular}{|c|c|c|c|c|}
\hline \multirow{2}{*} {} & \multicolumn{4}{|c|} {\text { Running time (s) }} \\
\cline { 2 - 5 } & F=10 & F=20 & F=40 & F=80 \\
\hline \text { SDR-BCD } & 54.2175 & 61.1350 & 169.0588 & 461.3819 \\
\hline \text { Penalty-Manifold FD } & 96.0028 & 101.2406 & 115.0541 & 116.8831 \\
\hline \text { Sequential Design } & 15.3721 & 17.9422 & 20.1504 & 37.0946 \\
\hline
\end{tabular}
\caption{Computational Time Comparison.} \label{compu_comp}
\end{table}

Fig.~\ref{fig_multi_sinr} illustrates the transmit power versus SINR targets. We first observe that the Penalty-Manifold joint design outperforms the start-of-the-art BCD-SDR joint design, which verifies the effectiveness of the proposed algorithm. Second, it is seen that the Penalty-Manifold joint design with random $\mathbf{\Theta}$ performs the worst among all the considered schemes. By simply changing the random $\mathbf{\Theta}$ to the SDR $\mathbf{\Theta}$ (while keeping the joint design of $\{\mathbf{W},\mathbf{V}\}$ unchanged), the transmit power consumption can be reduced by 4 dB. If $\mathbf{\Theta}$ is involved in the Penalty-Manifold joint design, another  about 5 dB power reduction can be obtained. These observations indicate that the design of RIS phase shifts plays the crucial role for performance optimization.
Third, we observe that the sequential design is about 1dB worse than the joint design with SDR $\mathbf{\Theta}$. This suggests that, when the RIS response matrix is designed sequentially, further optimizing the hybrid beamforming at the BS can only bring marginal improvement.
Last but not least, we observe that the power consumed by Penalty-Manifold beamforming is about 2dB higher than the power consumed by Penalty-Manifold FD. Note that the hybrid beamforming has a much lower hardware cost since it only employs $N=6$ RF chains at the BS, while the fully digital beamforming has $M=36$ RF chains. This means that the proposed hybrid beamforming is effective.

The influence of the RIS element number is considered in Fig.~\ref{fig_ele}. When the RIS element number increases from 12 to 60, the transmit power decreases about 15dB. Thus, we conclude that the RIS can greatly reduce the transmit power by installing a large number of elements.

Fig.~\ref{fig_distance} illustrates the transmit power versus the RIS horizontal distance. It is seen  that as the RIS horizontal distance $d_{RIS}$ increases, the transmit power increases firstly, and reaches the peak at 50 m, then decreases. This can be explained that the received power through the reflection of the RIS in the far field is proportional to $d_1^{-2} d_2^{-2}$, where $d_1$ and $d_2$ denote the distances between the BS-RIS and RIS-user, respectively. It is found that the RIS can be located near the BS or users to save energy.

{We further compare in Table \ref{compu_comp} the running time for various values of $F$. Here, $\mu$ is set to be 3. We set the RIS $F_1 \times F_2$ unit cells where $F_1=5$ and $F_2$ can vary. It is observed that the time consumed by the SDR-BCD method increases greatly as $F$ increases. It is interesting that the computational time of Penalty-Manifold FD is insensitive to $F$. And the time consumed by the Sequential Design is the least among the algorithms, which means that it has the lowest complexity.}

\section{Conclusion}
\label{sec_conclusion}
{In this paper, we investigate an RIS-aied downlink MIMO system, with the objective of minimizing the transmit power at the BS by jointly optimizing the hybrid A/D beamforming at the BS, as well as the overall response-coefficient at the RIS, subject to individual minimum SINR constraints. The non-convex problem is first solved by the penalty-based algorithm with manifold optimization, followed by a low-complexity sequential optimization. {In particular, we propose three different methods for optimizing the BS analog beamforming and the RIS response matrix in the penalty-based algorithm. The RCG-based joint optimization is found to outperform the other two methods but it has a slightly higher complexity.} Extensive simulation results demonstrate that the proposed algorithm outperforms the state-of-art BCD-SDR algorithm. Our simulation results provide useful insights into the corresponding wireless system design. In particular, the simulation results show that utilizing a large number of RIS units could help reduce the transmit power at the BS greatly. Moreover, 3-bit quantizers of both the RIS and the analog beamformer could approach the performance of continuous phase shifters.}

\bibliographystyle{IEEEtran}
\bibliography{hybrid3_2}

\end{document}